\newtheorem{definition}{Definition}
\newtheorem{lemma}{Lemma}
\newtheorem{example}{Example}
\def\tool{\textit{LfSat}}
\def\dnf#1{\mathit{NF}(#1)}
\def\polsat{\emph{Polsat}}
\def\of#1{\mathit{of}(#1)}
\def\off#1{\mathit{off}(#1)}
\def\ofg#1{\mathit{ofg}(#1)}
\def\ofr#1{\mathit{ofr}(#1)}
\def\ofi#1{\mathit{ofx}(#1)}
\def\ltlf{\textit{LTL}_f}
\def\ff{\mathsf{ff}}
\def\tt{\mathsf{tt}}
\def\expand{\mathit{NF}}
\newcommand{\tran}[1]{\xrightarrow[]{#1}}
\begin{document}
%

\title{$\ltlf$ Satisfiability Checking}
\author{Jianwen Li
\institute{East China Normal University, China}
\and Lijun Zhang
\institute{Institute of Software, Chinese Academy of Sciences, China}
\and Geguang Pu$^1$
\and Moshe Y. Vardi
\institute{Rice University, USA}
\and Jifeng He$^1$
 }

\maketitle
\bibliographystyle{ecai2014}

\begin{abstract}
We consider here Linear Temporal Logic (LTL) formulas interpreted
over \emph{finite} traces. We denote this logic by $\ltlf$. The existing
approach for $\ltlf$ satisfiability checking is  based on a reduction to
standard LTL  satisfiability checking. We describe here a novel direct
approach to $\ltlf$ satisfiability checking, where we take advantage of the
difference in the semantics between LTL and $\ltlf$. While LTL
satisfiability checking requires finding a \emph{fair cycle} in an
appropriate transition system, here we need to  search only for a finite
trace. This enables us to introduce specialized  heuristics, where we also
exploit recent progress in Boolean SAT solving. We have implemented our
approach in a prototype tool and experiments show that our approach
outperforms existing approaches.
\end{abstract}

\section{Introduction}
Linear Temporal Logic (LTL) was first introduced into computer science
as a property language for the verification for non-terminating reactive
systems \cite{Pnu77}. Following that, many researches in AI have been
attracted by LTL's rich expressiveness. Examples of applications of LTL in AI
include temporally extended goals in planning \cite{BK98,DV99,CDV02,PLGG11},
plan constraints \cite{BK00,Gab04}, and user preferences \cite{BFM06,BFM11,SBM11}.

In a recent paper \cite{GV13}, De Giacomo and Vardi argued that while
standard LTL is interpreted over \emph{infinite} traces, cf. \cite{Pnu77},
AI applications are typically interested only in \emph{finite} traces.
For example, temporally extended goals are viewed as finite desirable
sequences of states and a plan is correct if its execution succeeds in
yielding one of these desirable sequences. Also in the area of business-process
modeling, temporal specifications for declarative workflows are interpreted
over finite traces \cite{APS09}. De Giacomo and Vardi, therefore, introduced
$\ltlf$, which has the same syntax as LTL but is interpreted over finite
traces.

In the formal-verification community there is by now a rich body of
knowledge regarding automated-reasoning support for LTL. On one hand, there
are solid theoretical foundations, cf. \cite{Var96}. On the other hand,
mature software tools have been developed, such as SPOT \cite{DP04}.
Extensive research has been conducted
to evaluate these tools, cf. \cite{RV10}. While the basic theory for $\ltlf$
was presented at \cite{GV13}, no tool has yet to be developed for $\ltlf$,
to the best of our knowledge. Our goal in this paper is to address this gap.

Our main focus here is on the \emph{satisfiability problem}, which asks
if a given formula has satisfying model.  This most basic automated-reasoning
problem has attracted a fair amount of attention for LTL over the past few years
as a principled approach to \emph{property assurance}, which seeks to eliminate
errors when writing LTL properties, cf. \cite{RV10,LZPVH13}. 

De Giacomo and Vardi studied the computational complexity of $\ltlf$ satisfiability
and showed that it is PSPACE-complete, which is the same complexity as for
LTL satisfiability \cite{SC85}. Their proof of the upper bound uses
a reduction of $\ltlf$ satisfiability to LTL satisfiability. That is,
for an $\ltlf$ formula $\phi$, one can create an LTL formula $\phi'$
such that $\phi$ is satisfiable iff $\phi'$ is satisfiable; furthermore,
the translation from $\phi$ to $\phi'$ involves only a linear blow-up. The
reduction to LTL satisfiability problem can, therefore, take advantage of existing
LTL satisfiability solvers \cite{SD11,LZPVH13}.
 On the other hand, LTL satisfiability checking requires reasoning about
infinite traces, which is quite nontrivial algorithmically, cf. \cite{CVWY92},
due to the required fair-cycle test.  Such reasoning is not required for $\ltlf$
satisfiability. A reduction to LTL satisfiability, therefore, may add unnecessary
overhead to $\ltlf$ satisfiability checking.

This paper approaches the $\ltlf$ satisfiability problem directly.
We develop a direct, and more efficient, algorithm for checking
satisfiability of $\ltlf$, leveraging the existing body of knowledge
concerning LTL satisfiability checking. The finite-trace semantics
for $\ltlf$ is fully exploited, leading to considerable simplification
of the decision procedure and significant performance boost. The finite-trace
semantics also enables several heuristics that are not applicable to LTL
satisfiability checking. We also leverage the power of advanced Boolean SAT
solvers in our decision procedure. We have implemented the new approach
and experiments show that this approach significantly outperforms the
reduction to LTL satisfiability problems.

The paper is organized as follows. we first introduce the definition
of $\ltlf$, the satisfiability problem, and the associated transition
system in Section~\ref{sec:pre}. We then propose a direct satisfiability-checking
framework in Section \ref{sec:checking}. We discuss various optimization
strategies in Section \ref{sec:optimizations}, and present experimental
results in Section \ref{sec:exp}. Section \ref{sec:con} concludes the paper.

\section{Preliminaries}\label{sec:pre}

\subsection{LTL over Finite Traces}
The logic $\ltlf$ is a variant of LTL.  Classical LTL formulas are
interpreted on infinite traces, whereas $\ltlf$ formulas are defined
over the finite traces.
Given a set $\mathcal{P}$ of atomic propositions, an $\ltlf$ formula
$\phi$ has the form:

$\phi ::= p\ |\ \neg \phi\ |\ \phi\vee\phi\ |\ \phi\wedge\phi\
|\ X\phi\ |\ X_w\phi\ |\ \phi U\phi\ |\ \phi R\phi$

where $X$(strong Next), $X_w$(weak Next), $U$(Until), and $R$(Release) are
temporal operators. We have $X_w\phi \equiv \neg X\neg \phi$ and
$\phi_1 R\phi_2\equiv \neg(\neg\phi_1 U\neg\phi_2)$. Note that in $\ltlf$,
$X\phi\equiv X_w\phi$ is not true, which is the case in LTL.

For an atom $a\in\mathcal{P}$, we call it or its negation ($\neg a$)
a literal. We use the set $L$ to denote the set of literals, i.e.
$L=\mathcal{P}\cup \{\neg a| a\in\mathcal{P}\}$.
Other boolean operators, such as $\rightarrow$
and $\leftrightarrow$, can be represented by the combination ($\neg,\vee$)
or ($\neg,\wedge$), respectively, and we denote the
constant \textit{true} as $\tt$ and \textit{false} as $\ff$.
Moreover, we use the notations $G\phi$ (Global) and $F\phi$
(Eventually) to represent $\ff R\phi$ and $\tt U\phi$. We use $\phi,\psi$
to represent $\ltlf$ or LTL formulas, and $\alpha,\beta$ for
propositional formulas.

Note that standard $\ltlf$ has the same syntax as LTL, see \cite{GV13}.
Here, however, we introduce the $X_w$ operator, as we consider $\ltlf$
formulas in NNF (Negation Normal Form), which requires all negations
to be pushed all the way down to atoms.  So a dual operator for $X$
is necessary.  In LTL the dual of $X$ is $X$ itself, while in $\ltlf$ it is
$X_w$.

\noindent
{\bf Proviso}:
In the rest of paper we assume that all formulas
(both LTL and $\ltlf$) are in NNF, and thus there are types of
formulas, based on the primary connective:
$\tt$, $\ff$, literal, $\wedge$, $\vee$, $X$
(and $X_w$ in $\ltlf$), $U$ and $R$.

The semantics of $\ltlf$ formulas is interpreted over finite traces,
which is referred to as the $\ltlf$ interpretations \cite{GV13}. Given
an atom set $\mathcal{P}$, we define $\Sigma := 2^{L}$. Let
$\eta\in\Sigma^*$ with $\eta=\omega_0\omega_1\ldots\omega_n$, we use
$|\eta|=n+1$ to denote the length of $\eta$. Moreover, for $1\leq
i\leq n$, we use the notation $\eta^i$ to represent
$\omega_0\omega_1\ldots\omega_{i-1}$, which is the prefix of $\eta$
before position $i$ ($i$ is not included).  Similarly, we also use
$\eta_i$ to represent $\omega_i\omega_{i+1}\ldots\omega_n$, which is
the suffix of $\eta$ from position $i$. Then we define $\eta$ models
$\phi$, i.e. $\eta\models\phi$ in the following way:

\begin{itemize}
  \item $\eta\models\tt$ and $\eta\not\models\ff$;
  \item If $\phi=p$ is a literal, then $\eta\models\phi$ iff
  $p\in\eta^1$;
  \item If $\phi=X\psi$, then $\eta\models\phi$
  iff $|\eta|>1$ and $\eta_1\models\psi$;
  \item If $\phi=X_w\psi$, then $\eta\models\phi$
  iff $|\eta|>1$ and $\eta_1\models\psi$, or $|\eta|=1$;
  \item If $\phi=\phi_1 U\phi_2$ is an Until formula, then
  $\eta\models\phi$ iff there exists $0\leq i< |\eta|$
  such that $\eta_i\models\phi_2$, and for every $0\leq j < i$ it
  holds $\eta_j\models\phi_1$ as well;
  \item If $\phi=\phi_1 R\phi_2$ is a Release formula, then
  $\eta\models\phi$ iff either for every $0\leq i< |\eta|$
  $\eta_i\models\phi_2$ holds, or there exists
  $0\leq i< |\eta|$ such that $\eta_i\models\phi_1$ and for
  all $0\leq j\leq i$ it holds $\eta_j\models\phi_2$ as well;
  \item If $\phi=\phi_1\wedge\phi_2$, then $\eta\models\phi$ iff
  $\eta\models\phi_1$ and $\eta\models\phi_2$;
  \item If $\phi=\phi_1\vee\phi_2$, then $\eta\models\phi$ iff
  $\eta\models\phi_1$ or $\eta\models\phi_2$.
\end{itemize}

The difference between the strong Next ($X$) and the weak Next ($X_w$)
operators is $X$ that requires a next state in the following while $X_w$
may not. Thus $X_w\phi$ is always true in the last state of a finite
trace, since no next state is provided.  As a result, in $\ltlf$ $X\ff$
is unsatisfiable, while $X_w\ff$ is satisfiable, which is quite different
with that in LTL, where neither $X\ff$ nor $\neg X \neg \ff$ are
satisfiable.

Let $\phi$ be an $\ltlf$ formula, we use $CF(\phi)$ to represent the
set of conjuncts in $\phi$, i.e.  $CF(\phi)=\{\phi_i | \phi_i\in
I\}$ if $\phi = \bigwedge_{i\in I}\phi_i$, where the root of
$\phi_i$ is not a conjunction.  $DF(\phi)$ (the set of disjuncts) is
defined analogously.

\subsection{The $\ltlf$ Satisfiability Problem}
The satisfiability problem is to check whether, for a given $\ltlf$
formua $\phi$, there is a finite trace $\eta\in\Sigma^*$ such that
$\eta\models\phi$:

\begin{definition}[$\ltlf$ Satisfiability Problem]
Given an $\ltlf$ formula $\phi$ over the alphabet $\Sigma$,
we say $\phi$ is satisfiable iff there is a finite trace
$\eta\in\Sigma^*$ such that $\eta\models\phi$.
\end{definition}

One approach is to reduce the $\ltlf$ satisfiability problem to
that of LTL.
\begin{theorem}[\cite{GV13}]
The Satisfiability problem for $\ltlf$ formulas is PSPACE-complete.
\end{theorem}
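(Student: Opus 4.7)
The plan is to prove containment and hardness separately. For containment, I would realize the reduction to LTL satisfiability alluded to just above the statement, which lets us cite the classical PSPACE upper bound for LTL from \cite{SC85}. For hardness, I would observe that the standard Sistla--Clarke encoding of polynomial-space Turing machine acceptance already produces a witness of polynomial length and so adapts directly to $\ltlf$.

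Concretely, given an $\ltlf$ formula $\phi$ over $\mathcal{P}$, I would introduce a fresh atom $t$ (a ``tail'' marker) and translate each subformula $\psi$ of $\phi$ to an LTL formula $\psi'$ over $\mathcal{P}\cup\{t\}$, intending that the models of $\psi'$ carry a finite $\neg t$-prefix followed by a suffix on which $t$ holds forever, with the prefix playing the role of the $\ltlf$ witness. The translation fixes literals and Boolean connectives and sets $(X\psi)' = X(\neg t \wedge \psi')$, $(X_w\psi)' = X(t \vee \psi')$, $(\psi_1 U \psi_2)' = \psi_1' U (\psi_2' \wedge \neg t)$, and $(\psi_1 R \psi_2)' = (\psi_1' \vee t) R (\psi_2' \vee t)$. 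The final encoding is
\[
\phi^\ast := \phi' \wedge \neg t \wedge G(t \to X t) \wedge F t,
\]
which forces $t$ to be false initially, monotone, and eventually true, so that every LTL model of $\phi^\ast$ exhibits a nonempty finite $\neg t$-prefix serving as an $\ltlf$-model of $\phi$, and conversely any $\ltlf$-model of $\phi$ can be padded with $t$-states to a model of $\phi^\ast$. Since $|\phi^\ast| = O(|\phi|)$, combining this with LTL satisfiability in PSPACE yields the upper bound.

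For hardness, I would rerun the Sistla--Clarke reduction from acceptance of polynomial-space Turing machines: encode each configuration of $M$ on input $x$ at one trace position and capture the transition relation by an $X$-bounded conjunction. Since an accepting computation has polynomially many configurations, the resulting $\ltlf$ formula $\phi_{M,x}$ of polynomial size is $\ltlf$-satisfiable iff $M$ accepts $x$; the ``loop forever at the accepting configuration'' gadget of the original infinite-trace reduction is replaced by an $X_w$-based terminator that stabilises at the final configuration. The main obstacle, and where I would spend the most care, is the correctness of the upper-bound translation, specifically the $R$ clause: by induction on $\psi$ one must show that for every LTL trace $\pi$ on which $t$ first holds at some position $n$ and every $i < n$, $\pi$ satisfies $\psi'$ at $i$ exactly when the length-$n$ prefix of $\pi$, viewed as an $\ltlf$ trace, satisfies $\psi$ at $i$. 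The release case is the subtle one, because $\ltlf$-release splits into two disjuncts (``$\psi_2$ on the whole remaining prefix'' versus ``$\psi_2$ up to and including a $\psi_1$-point''), and one must verify that the monotonicity of $t$ makes them collapse correctly into the single LTL clause $(\psi_1' \vee t) R (\psi_2' \vee t)$ without spuriously creating or destroying releases at the boundary position $n{-}1$.
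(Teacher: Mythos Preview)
Your proposal is correct and, for the upper bound, follows essentially the same route as the paper: a linear reduction to LTL satisfiability via a fresh marker atom that delimits the finite portion of the trace, after which one invokes \cite{SC85}. The differences are cosmetic. The paper uses a marker \textit{Tail} that is \emph{true} on the finite prefix and then false forever, encoded by $\textit{Tail}\wedge \textit{Tail}\,U\,(G\neg\textit{Tail})$, whereas your $t$ has the opposite polarity (false on the prefix, then true forever). Also, the paper does its translation on the pre-NNF syntax (only $\neg$, $\wedge$, $\vee$, $X$, $U$) and recovers $X_w$ and $R$ via $\neg X\neg$ and $\neg U\neg$, while you work directly in NNF and give explicit clauses for $X_w$ and $R$; your $R$ clause $(\psi_1'\vee t)R(\psi_2'\vee t)$ is exactly what the paper's negation trick would produce after pushing $\neg$ through. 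Neither choice buys anything over the other.

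Where you go beyond the paper is hardness: the paper's proof sketch gives only the upper bound and silently relies on the citation for PSPACE-hardness, whereas you outline an adaptation of the Sistla--Clarke machine encoding to finite traces. That addition is sound and self-contained, so your write-up is in fact more complete than the paper's sketch.
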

{\bf Proof Sketch}:
It is easy to reduce the $\ltlf$ satisfiability to LTL satisfiability:
\begin{enumerate}
\item Introduce a proposition ``\textit{Tail}'';
\item Require that \textit{Tail} holds at position 0;
\item Require also that \textit{Tail} stays $\tt$ until
it turns into $\ff$, and after that stays $\ff$ forever
($\textit{Tail} U (G\neg\textit{Tail})$).
\item The $\ltlf$ formula $\phi$ is translated into a
corresponding LTL formula in the following way:
\begin{itemize}
\item $t(p) \rightarrow p$, where $p$ is a literal;
\item $t(\neg\phi)=\neg t(\phi)$;
\item $t(\phi_1\wedge\phi_2)\rightarrow t(\phi_1)
\wedge t(\phi_2)$;
\item $t(\phi_1\vee\phi_2)\rightarrow t(\phi_1) \vee t(\phi_2)$;
\item $t(X\psi)\rightarrow X(\textit{Tail}\wedge t(\psi))$;
\item $t(\phi_1 U\phi_2)\rightarrow \phi_1 U (\textit{Tail}\wedge t(\phi_2))$;
\end{itemize}
\end{enumerate}
(The translation here does not require $\phi$ in NNF.
Thus the $X_w$ and $R$ operators can be handled by the rules
$X_w\phi\equiv \neg X\neg\phi$ and
$\phi_1 R\phi_2\equiv\neg(\neg \phi_1 U\neg\phi_2)$.)
Finally one can prove that $\phi$ is satisfiable iff
$\textit{Tail}\wedge \textit{Tail} U (G\neg\textit{Tail})\wedge t(\phi)$
is satisfiable. $\qed$

The reduction approach can take advantage of existing LTL satisfiability
solvers.  But, there may be an overhead as we need to find a
\emph{fair cycle} during LTL satisfiability checking, which is not
necessary in $\ltlf$ checking.

\subsection{$\ltlf$ Transition System}\label{sec:lts}
In \cite{LZPVH13}, Li et al. have proposed using transition systems for
checking satisfiability of LTL formulas. Here we adapt this approach
to $\ltlf$.  First, we define the \textit{normal form} for $\ltlf$
formulas.

\begin{definition}[Normal Form]\label{def:expansion}
The \textit{normal form} of an $\ltlf$ formula $\phi$, denoted as
$\expand(\phi)$, is a formula set defined as follows:
\begin{itemize}
\item $\expand(\phi) =\{\phi \wedge X(\tt)\}$ if $\phi\not\equiv\ff$ is a
  propositional formula. If $\phi\equiv\ff$, we define $\expand(\ff)=\emptyset$;
\item $\expand(X\phi/X_w\phi) = \{\tt\wedge X(\psi)\mid \psi\in DF(\phi)\}$;
\item $\expand(\phi_1 U \phi_2) = \expand(\phi_2)\cup \expand( \phi_1 \wedge
  X(\phi_1 U \phi_2))$;
\item $\expand(\phi_1 R \phi_2) = \expand(\phi_1 \wedge \phi_2) \cup \expand( \phi_2 \wedge X(\phi_1 R \phi_2))$;
\item $\expand(\phi_1 \vee \phi_2) = \expand(\phi_1)\cup \expand(\phi_2)$;
\item $\expand(\phi_1\wedge\phi_2) = \{(\alpha_1\wedge\alpha_2) \wedge X(\psi_1\wedge\psi_2)\mid \forall i=\mathit{1,2}. \ \alpha_i\wedge X(\psi_i)\in \expand(\phi_i)\}$;
\end{itemize}
For each $\alpha_i\wedge X\phi_i\in\dnf{\phi}$, we say it a
clause of $\dnf{\phi}$.
\end{definition}

(Although the normal forms of $X$ and $X_w$ formulas are the same,
we do distinguished bethween them through the accepting conditions
introduced below.)
Intuitively, each clause $\alpha_i\wedge X\phi_i$ of $\dnf{\phi}$
indicates that the propositionl formula $\alpha_i$ should hold now and
then $\phi_i$ should hold in the next state.  For $\phi_i$, we can also
compute its normal form. We can repeat this procedure until no new states
are required.
\begin{definition}[$\ltlf$ Transition System]\label{def:lts}
Let $\phi$ be the input formula. The
labeled transition system $T_{\phi}$ is a tuple $\langle Act,
S_\phi, \tran{}, \phi\rangle$ where:
1). $\phi$ is the initial state;
2). $Act$ is the set of conjunctive formulas over $L_{\phi}$;
3). the transition relation
$\tran{}\ \subseteq S_\phi\times Act\times S_\phi$ is defined by:
$\psi_1\tran{\alpha}\psi_2$ iff there exists $\alpha\wedge
X(\psi_2) \in \expand(\psi_1)$;
and 4). $S_\phi$ is the smallest set of formulas such that
$\psi_1\in S_\phi$, and $\psi_1\tran{\alpha}\psi_2$ implies
$\psi_2\in S_\phi$.

\end{definition}

Note that in LTL transition systems the $\ff$ state can be deleted,
as it can never be part of a fair cycle.  This state must be kept
in $\ltlf$ transition systems: a finite trace that reach $\ff$ may
be accepted in $\ltlf$, cf. $X_w\ff$.  Nevertheless, $\ff$ edges are
not allowed both in $\ltlf$ and LTL transition systems.

A \emph{run} of $T_{\phi}$ on finite trace
$\eta=\omega_0\omega_1\ldots\omega_n\in\Sigma^*$ is a sequence
$s_0\tran{\alpha_0}s_1\tran{\alpha_1}\ldots s_n\tran{\alpha_n}s_{n+1}$
such that $s_0=\phi$ and for every $0\leq i\leq n$ it holds
$\omega_i\models\alpha_i$. We say $\psi$ is \emph{reachable} from $\phi$ iff there is a
run of $T_{\phi}$ such that the final state is $\psi$.

\section{$\ltlf$ Satisfiability-Checking Framework}\label{sec:checking}
In this section we present our framework for checking satisfiability
of $\ltlf$ formulas.
First we show a simple lemma concerning finite
sequences of length $1$.
\begin{lemma}\label{lemma:last}
  For a finite trace $\eta\in\Sigma^*$ and $\ltlf$ formula
  $\phi$, if $|\eta|=1$ then $\eta\models\phi$ holds iff:
  \begin{itemize}
    \item $\eta\models\tt$ and $\eta\not\models\ff$;
    \item If $\phi=p$ is a literal, then return
    true if $\phi\in\eta$. otherwise return false;
    \item If $\phi=\phi_1\wedge\phi_2$, then return
    $\eta\models\phi_1$ and $\eta\models\phi_2$;
    \item If $\phi=\phi_1\vee\phi_2$, then return
    $\eta\models\phi_1$ or $\eta\models\phi_2$;
    \item If $\phi=X\phi_2$, then return false;
    \item If $\phi=X_w\phi_2$, then return true;
    \item If $\phi=\phi_1U\phi_2$ or $\phi=\phi_1R\phi_2$,
    then return $\eta\models\phi_2$.
  \end{itemize}
\end{lemma}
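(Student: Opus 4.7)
The plan is to prove the lemma by direct case analysis on the structure of $\phi$, specializing the general semantic definition of $\ltlf$ given in Section~\ref{sec:pre} to the hypothesis $|\eta| = 1$. Before starting the case split, I would record the following identifications that will be used silently throughout: when $|\eta|=1$ with $\eta = \omega_0$, we have $\eta^1 = \omega_0$ and $\eta_0 = \eta$, and the only index $i$ satisfying $0 \leq i < |\eta|$ is $i=0$.

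First I would dispatch the purely propositional cases. The clauses for $\tt$, $\ff$, $\wedge$, and $\vee$ are literally the definitions of $\models$ from Section~\ref{sec:pre} and require no use of the length hypothesis. For a literal $p$, the semantics gives $\eta \models p$ iff $p \in \eta^1$; by the identification above this is $p \in \eta$, which matches the lemma.

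Next I would handle the Next-operators, which is where the length hypothesis bites directly. For $\phi = X\phi_2$, the semantics requires $|\eta|>1$, contradicting $|\eta|=1$, so $\eta \not\models \phi$. Dually, the semantics of $X_w\phi_2$ explicitly lists $|\eta|=1$ as a satisfying condition, so $\eta \models X_w\phi_2$ unconditionally. These two items are exactly what distinguish $\ltlf$ from LTL at trace end and are the conceptual point of the lemma.

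The only mildly nontrivial work is in the Until and Release cases, both of which quantify over positions $0 \leq i < |\eta| = 1$, forcing $i=0$. For $\phi = \phi_1 U \phi_2$, the existential witness must be $i=0$, giving $\eta_0 \models \phi_2$, while the inner universal over $0 \leq j < 0$ is vacuous; by the identification $\eta_0 = \eta$, this reduces precisely to $\eta \models \phi_2$. For $\phi = \phi_1 R \phi_2$ I would argue both disjuncts in the semantics collapse to $\eta \models \phi_2$: the ``for every $0 \leq i < |\eta|$'' disjunct becomes just $\eta_0 \models \phi_2$, and the existential disjunct likewise forces $i=0$ and demands $\eta_0 \models \phi_2$; conversely, if $\eta \models \phi_2$, the first disjunct is immediately satisfied. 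This is the step I expect to require the most care, since one must make sure neither direction of the equivalence is lost when the two disjuncts of the Release semantics degenerate, but the collapse is forced by the indexing being trivial when $|\eta| = 1$.
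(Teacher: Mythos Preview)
Your proposal is correct and follows exactly the approach the paper takes: the paper's proof is the single sentence ``This lemma can be directly proven from the semantics of $\ltlf$ formulas by fixing $|\eta| = 1$,'' and your case analysis simply spells out that direct verification. The only places requiring any argument are the $U$ and $R$ cases where the index range collapses to $i=0$, and you handle those correctly.
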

\begin{proof}
  This lemma can be directly proven from the semantics of $\ltlf$
  formulas by fixing $|\eta| = 1$.
\end{proof}

Now we characterize the satisfaction relation for finite sequences:
\begin{lemma}\label{lemma:lts}
  For a finite trace
  $\eta=\omega_0\omega_1\ldots\omega_n\in\Sigma^*$ and
  $\ltlf$ formula $\phi$,
  \begin{enumerate}
  \item If $n=0$, then $\eta\models\phi$ iff there exists
    $\alpha_i\wedge X\phi_i\in\dnf{\phi}$ such that
    $\omega_0\models\alpha_i$ and $CF(\alpha_i)\models\phi$;
    \item If $n\ge 1$, then $\eta\models\phi$ iff there exists
    $\alpha_i\wedge X\phi_i\in\dnf{\phi}$ such that $\omega_0\models\alpha_i$ and
    $\eta_1\models\phi_i$;
    \item $\eta\models\phi$ iff there exists a run
    $\phi=\phi_0\tran{\alpha_0}\phi_1\tran{\alpha_1}\phi_2\ldots\tran{\alpha_n}\phi_{n+1}$
    in $T_{\phi}$ such that for every $0\leq i\leq n$ it holds
    that $\omega_i\models\alpha_i$ and $\eta_i\models\phi_i$.
  \end{enumerate}
\end{lemma}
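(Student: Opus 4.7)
The plan is to establish (1) and (2) simultaneously by structural induction on $\phi$ along the cases of Definition~\ref{def:expansion}, and then derive (3) by induction on the trace length $n+1$ using (2) at each step. The base cases ($\phi \in \{\tt,\ff\}$ or $\phi$ a literal) are immediate: $\dnf{\phi}$ contains at most one clause $\phi\wedge X\tt$, and both (1) and (2) collapse directly to the semantic definitions, with Lemma~\ref{lemma:last} handling the $n=0$ side.

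For the inductive step of (2), I dispatch on the primary connective. If $\phi$ is $X\psi$ or $X_w\psi$, the clauses $\tt\wedge X\chi$ with $\chi\in DF(\psi)$ reduce the claim to $\eta_1\models\psi$ iff $\eta_1$ satisfies some disjunct of $\psi$, which follows from the disjunctive decomposition of $\psi$. For $\phi_1 U\phi_2$, I combine the unrolling identity $\phi_1 U\phi_2\equiv \phi_2\vee(\phi_1\wedge X(\phi_1 U\phi_2))$ with $\dnf{\phi_1 U\phi_2}=\dnf{\phi_2}\cup\dnf{\phi_1\wedge X(\phi_1 U\phi_2)}$ and invoke the IH; the $R$ case is symmetric via the dual unrolling $\phi_1 R\phi_2\equiv (\phi_1\wedge\phi_2)\vee(\phi_2\wedge X(\phi_1 R\phi_2))$; the $\vee$ case is immediate from $\dnf{\phi_1\vee\phi_2}=\dnf{\phi_1}\cup\dnf{\phi_2}$; and the $\wedge$ case relies on the fact that a joint clause $(\alpha_1\wedge\alpha_2)\wedge X(\psi_1\wedge\psi_2)\in\dnf{\phi_1\wedge\phi_2}$ splits conjunct-wise into component clauses, so $\omega_0\models\alpha_1\wedge\alpha_2$ and $\eta_1\models\psi_1\wedge\psi_2$ are both recovered by the IH applied to $\phi_1$ and $\phi_2$. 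Part (1) runs essentially the same induction, but substitutes Lemma~\ref{lemma:last} for the shift-semantics in the temporal cases: the extra conjunct $CF(\alpha_i)\models\phi$ encodes precisely the terminal-state acceptance check, so that, for example, no clause of $\dnf{X\psi}$ witnesses a length-$1$ trace (since $CF(\tt)\not\models X\psi$) while every clause of $\dnf{X_w\psi}$ does.

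Part (3) then follows by induction on $n$. The base $n=0$ invokes (1) to produce the single transition $\phi\tran{\alpha_0}\phi_1$. For $n\ge 1$, (2) peels off $\omega_0$ to obtain $\alpha_0\wedge X\phi_1\in\dnf{\phi}$ with $\omega_0\models\alpha_0$ and $\eta_1\models\phi_1$, after which the IH applied to $(T_{\phi_1},\eta_1)$ supplies the remainder of the run. The main obstacle is the $\wedge$ case in (2), because $\dnf{\phi_1\wedge\phi_2}$ is the full Cartesian product of the component normal forms and both directions of the iff must be checked carefully: in particular, the forward direction requires that clauses separately witnessing $\eta\models\phi_1$ and $\eta\models\phi_2$ can always be paired into a single joint clause that still fires on $\omega_0$ and leaves a conjunctive successor obligation consistent with $\eta_1$. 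Once this case is discharged, the remaining connectives follow routinely from the $U$/$R$ unrolling identities and the standard $\ltlf$ semantics.
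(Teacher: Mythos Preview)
Your proposal is correct and follows essentially the same approach as the paper: structural induction on $\phi$ for items (1) and (2), with Lemma~\ref{lemma:last} handling the length-$1$ temporal cases, and then an induction on $n$ that unrolls item (2) (with item (1) as the base) to obtain item (3). The only cosmetic difference is that you phrase the inductive step of (3) as applying the hypothesis to $T_{\phi_1}$, whereas the paper speaks directly of ``recursively applying the second item''; since transitions are defined locally via $\dnf{\cdot}$, these are the same argument.
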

\begin{proof}
\begin{enumerate}
\item $CF(\alpha_i)$ is treated to be a finite trace whose length is 1. We prove the first item by structural induction over
  $\phi$.  
  \begin{itemize}
  \item If $\phi=p$, then $\eta\models\phi$ iff $\omega_0\models p$ and $CF(p)\models\phi$ hold, where
  $p\wedge X\tt$ is actually in $\dnf{\phi}$;
  \item If
  $\phi=\phi_1\wedge\phi_2$, then $\eta\models\phi$ holds iff
  $\eta\models\phi_1$ and $\eta\models\phi_2$ hold, and iff by induction
  hypothesis, there exists $\beta_i\wedge X\psi_i$ in $\dnf{\phi_i}$
  such that $\omega_0\models\beta_i$ and $CF(\beta_i)\models\phi$
  ($i=1,2$). Let $\alpha_i=\beta_1\wedge\beta_2$ and
  $\phi_i=\psi_1\wedge\psi_2$, then according to Definition
  \ref{def:expansion} we know $\alpha_i\wedge X\phi_i$ is in
  $\dnf{\phi}$, and $\omega_0\models\alpha_i$ and
  $CF(\alpha_i)\models\phi$ hold; The proof for the case when $\phi=\phi_1\vee\phi_2$ is similar;
  \item Note that $\eta\models X\psi$ is always false,
  and if $\phi=X_w\psi$ then from Lemma \ref{lemma:last} it is always true that $\eta\models X_w\psi$ iff $\tt\wedge X\psi\in \dnf{\phi}$ and $\tt\models X_w\psi$;
  \item If
  $\phi=\phi_1U\phi_2$, then $\eta\models\phi$ holds iff
  $\eta\models\phi_2$ holds from Lemma \ref{lemma:last}, and iff by induction hypothesis, there exists
  $\alpha_i\wedge X\phi_i\in\dnf{\phi_2}$ such that
  $\omega_0\models\alpha_i$ and $CF(\alpha_i)\models\phi_2$, and thus $CF(\alpha_i)\models\phi$
  according to $\ltlf$ semantics.  From
  Definition \ref{def:expansion} we know as well that
  $\alpha_i\wedge X\phi_i$ is in $\dnf{\phi}$, thus the proof is done;
  The proof for the case when $\phi=\phi_1R\phi_2$ is similar;
  \end{itemize}

  \item The second item is also proven by structural induction over $\phi$.
  \begin{itemize}
  \item If $\phi=\tt$ or $\phi=p$, then $\eta\models\phi$ iff $\omega_0\models\phi$ and $\eta_1\models\tt$ hold, where
  $\phi\wedge X\tt$ is actually in $\dnf{\phi}$;
  \item If $\phi=X\phi_2$ or $\phi=X_w\phi_2$, since $|\eta|>1$ so it is obviously true that $\eta\models\phi$ iff
  $\omega_0\models\tt$ and
  $\eta_1\models\phi_2$ hold according to $\ltlf$ semantics, and obviously $\tt\wedge X\phi_2$ is
  in $\dnf{\phi}$;
  \item If $\phi=\phi_1\wedge\phi_2$, then $\eta\models\phi$ iff $\eta\models\phi_1$ and $\eta\models\phi_2$, and iff
  by induction hypothesis, there exists $\beta_i\wedge X\psi_i\in \dnf{\phi_i}(i=1,2)$ such that
  $\omega_0\models\beta_i$ and $\eta_1\models\psi_i$ hold, and iff $\omega_0\models\beta_1\wedge\beta_2$
  and $\eta_1\models\psi_1\wedge\psi_2$ hold, in which $(\beta_1\wedge\beta_2)\wedge X(\psi_1\wedge\psi_2)$ is
  indeed in $\dnf{\phi}$; The case when $\phi=\phi_1\vee\phi_2$ is similar;
  \item If $\phi=\phi_1 U\phi_2$, then $\eta\models\phi$ iff $\eta\models\phi_2$ or $\eta\models(\phi_1\wedge X\phi)$.
  If $\eta\models\phi_2$ holds, then by induction hypothesis iff there exists $\alpha_i\wedge X\phi_i\in\dnf{\phi_2}$ such that
  $\omega_0\models\alpha_i$ and $\eta_1\models\phi_i$. According to Definition \ref{def:expansion} we know $\alpha_i\wedge X\phi_i$ is also $\dnf{\phi_2}$. On the other hand, if $\eta\models\phi_1\wedge X\phi$ holds, the proofs for $\wedge$ formulas are already done. Thus,
  it is true that $\eta\models\phi$ iff there exists $\alpha_i\wedge X\phi_i\in\dnf{\phi_2}$ such that
  $\omega_0\models\alpha_i$ and $\eta_1\models\phi_i$; The case when $\phi=\phi_1 R\phi_2$ is similar to prove.
  \end{itemize}

  \item Applying the first item if $n=0$ and recursively applying the second item if $n\geq 1$, we can prove the third item.
\end{enumerate}
\end{proof}

Lemma \ref{lemma:lts} states that, to check whether a finite trace
$\eta=\omega_0\omega_1\ldots\omega_n$ satisfies the $\ltlf$
formula $\phi$, we can find a run of $T_{\phi}$ on $\eta$ such that
$\eta$ can finally reach the transition
$\phi_n\tran{\alpha_n}\phi_{n+1}$ and satisfies
$\omega_n\models\alpha_n$, and moreover $CF(\alpha_n)\models\phi_{n}$.
Now we can give the main theorem of this paper.

\begin{theorem}\label{theorem:main}
  Given an $\ltlf$ formula $\phi$ and a finite trace
  $\eta=\omega_0\ldots\omega_n(n\geq 0)$, we have that
  $\eta\models\phi$ holds iff there exists a run of $T_\phi$ on $\eta$ which
  ends at the transition $\psi_1\tran{\alpha}\psi_2$ satisfying
  $CF(\alpha)\models\psi_1$.
\end{theorem}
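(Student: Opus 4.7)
The plan is to derive this theorem as a direct corollary of Lemma~\ref{lemma:lts}, by combining items (3) and (1) of that lemma and reading off the extra condition $CF(\alpha)\models\psi_1$ from the treatment of the trace's last letter.

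First I would apply Lemma~\ref{lemma:lts}(3) to $\eta\models\phi$. This gives an equivalence between $\eta\models\phi$ and the existence of a run
$\phi=\phi_0\tran{\alpha_0}\phi_1\tran{\alpha_1}\cdots\tran{\alpha_n}\phi_{n+1}$
in $T_\phi$ with $\omega_i\models\alpha_i$ and $\eta_i\models\phi_i$ for every $0\le i\le n$. So the theorem reduces to showing that, \emph{given such a run}, the ``tail condition'' $\eta_n\models\phi_n$ is equivalent to the existence of a final transition $\psi_1\tran{\alpha}\psi_2$ (with $\psi_1=\phi_n$, $\alpha=\alpha_n$, $\psi_2=\phi_{n+1}$) satisfying $CF(\alpha)\models\psi_1$.

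The key observation is that $\eta_n$ is a one-letter trace $\omega_n$, so Lemma~\ref{lemma:lts}(1) applies directly: $\omega_n\models\phi_n$ iff there is a clause $\alpha\wedge X\psi_2\in\dnf{\phi_n}$ with $\omega_n\models\alpha$ and $CF(\alpha)\models\phi_n$. By Definition~\ref{def:lts}, existence of such a clause is exactly existence of a transition $\phi_n\tran{\alpha}\psi_2$ in $T_\phi$. Hence the final edge of the run can be chosen to be one whose guard $\alpha$ additionally satisfies $CF(\alpha)\models\phi_n$, which is the stated condition. Both directions come for free from the fact that Lemma~\ref{lemma:lts}(1) and (3) are stated as iffs, so no separate argument is needed for soundness versus completeness.

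I do not anticipate a real obstacle, since the lemma essentially already does the work; the only point requiring a little care is making sure that the last transition in the Lemma~\ref{lemma:lts}(3) run and the transition produced by Lemma~\ref{lemma:lts}(1) can be identified, i.e.\ that when we build the run by the recursive unfolding used in the proof of Lemma~\ref{lemma:lts}, the clause chosen at step $n$ is precisely a witness of the form $\alpha\wedge X\psi_2\in\dnf{\phi_n}$ satisfying the one-letter condition. This follows because at step $n$ the remaining suffix $\eta_n$ has length $1$, forcing us to use item (1) rather than item (2) of the lemma, which automatically gives the additional requirement $CF(\alpha)\models\phi_n$. Once this alignment is noted, the two directions of the theorem follow immediately.
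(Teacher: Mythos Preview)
Your proposal is correct and follows exactly the approach of the paper, which simply says to combine the first and third items of Lemma~\ref{lemma:lts}. In fact you spell out considerably more detail than the paper does, including the alignment of the last transition via item~(1), so there is nothing to add.
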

\begin{proof}
   Combine the first and third items in Lemma \ref{lemma:lts}, and
   we can easily prove this theorem.
\end{proof}

We say the state $\psi_1$ in $T_{\phi}$ is \emph{accepting}, if there
exists a transition $\psi_1\tran{\alpha}\psi_2$ such that
$CF(\alpha)\models\psi_1$.  Theorem
\ref{theorem:main} implies that, the formula $\phi$ is satisfiable if
and only if there exists an accepting state $\psi_1$ in $T_{\phi}$
which is reachable from the initial state $\phi$. Based on this
observation, we now propose a simple on-the-fly satisfiability-checking framework for $\ltlf$ as follows:
  \begin{enumerate}
    \item If $\phi$ equals $\tt$, return $\phi$ is \textit{satisfiable};
    \item The checking is processed on the transition system $T_{\phi}$ on-the-fly, i.e. computing the
    reachable states step by step with the DFS (Depth First Search) manner, until an accepting one
    is reached: Here we return \textit{satisfiable};
    \item Finally we return \textit{unsatisfiable} if all states in the
    whole transition system are explored.
  \end{enumerate}

  The complexity of our algorithm mainly depends on the size of
  constructed transition system. The system construction is
  the same as the one for LTL proposed in \cite{LZPVH13}.  Given an
  $\ltlf$ formula $\phi$, the constructed transition system $T_{\phi}$
  has at worst the size of $2^{cl(\phi)}$, where $cl(\phi)$ is the set
  of subformulas of $\phi$.

  \section{Optimizations}\label{sec:optimizations}

 In this section we propose some optimization strategies by exploiting SAT
 solvers. First we study the relationship between the satisfiability problems
 for $\ltlf$ and LTL formulas.

\subsection{Relating to LTL Satisfiability}

In this section we discuss some connections between $\ltlf$ and LTL
formulas. We say an $\ltlf$ formula $\phi$ is \emph{$X_w$-free} iff $\phi$
does not have the $X_w$ operator. Note that$\ltlf$ formulas may
contain the $X_w$ operator, while standard LTL ones do not. Here consider
$X_w$-free formulas, in which $\ltlf$ and LTL have the same syntax.
First the following lemma shows how to extend a finite trace into an
infinite one but still preserve the satisfaction from $\ltlf$ to LTL:

\begin{lemma}\label{lemma:f2inf}
  Let $\eta = \omega_0$ and $\phi$ an $\ltlf$ formula which is $X_w$-free,
  then $\eta\models\phi$ implies $\eta^{\omega}\models\phi$ when $\phi$ is
  considered as an LTL formula.
\end{lemma}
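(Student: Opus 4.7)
The plan is to proceed by structural induction on $\phi$, exploiting the fact that since $|\eta|=1$ the trace $\eta^\omega$ is the stationary sequence $\omega_0 \omega_0 \omega_0 \ldots$ whose every suffix $(\eta^\omega)_i$ equals $\eta^\omega$ itself. This stationarity is what will ultimately let every $\ltlf$-satisfaction at the single point propagate to LTL-satisfaction over all positions. Since $\phi$ is in NNF and $X_w$-free (by the global proviso and the hypothesis), the cases to consider are $\tt$, $\ff$, literal, $\wedge$, $\vee$, $X$, $U$ and $R$.

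For the base cases ($\tt$, $\ff$, literal) the result is immediate: $\eta \models p$ means $p \in \omega_0$ by Lemma \ref{lemma:last}, and then $\eta^\omega \models p$ in LTL because the LTL semantics also only looks at position $0$. The boolean cases $\wedge$ and $\vee$ follow directly from the induction hypothesis applied to the immediate subformulas. For $X\psi$, Lemma \ref{lemma:last} tells us that $\eta \not\models X\psi$ when $|\eta|=1$, so the implication holds vacuously; this is exactly where the $X_w$-free hypothesis is essential, since $\eta \models X_w \psi$ would be true but $\eta^\omega \models X_w \psi$ is not even a well-formed LTL claim.

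For the Until case, Lemma \ref{lemma:last} gives us $\eta \models \phi_1 U \phi_2 \Rightarrow \eta \models \phi_2$; the inductive hypothesis then yields $\eta^\omega \models \phi_2$ in LTL, which witnesses the Until at position $i = 0$. The Release case is the one I expect to require the most care. From $\eta \models \phi_1 R \phi_2$ we again obtain $\eta \models \phi_2$ via Lemma \ref{lemma:last}, and by induction $\eta^\omega \models \phi_2$ in LTL. Here I would explicitly invoke the stationarity observation: since $(\eta^\omega)_i = \eta^\omega$ for every $i$, the fact that $\eta^\omega \models \phi_2$ at position $0$ entails $(\eta^\omega)_i \models \phi_2$ for all $i \geq 0$. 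Thus the ``$\phi_2$ holds forever'' disjunct of the LTL Release semantics is satisfied, regardless of whether $\phi_1$ is ever witnessed, giving $\eta^\omega \models \phi_1 R \phi_2$.

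The main obstacle is precisely this Release case, because the LTL semantics of $R$ is a greatest fixpoint and demands $\phi_2$ globally (until a release), while the $\ltlf$ side only promises $\phi_2$ at the final (and only) position of $\eta$. The resolution goes through cleanly thanks to the stationarity of $\eta^\omega$, but this step must be stated explicitly rather than left implicit; everything else is routine structural induction.
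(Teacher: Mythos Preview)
Your proposal is correct and follows essentially the same approach as the paper: structural induction over $\phi$, with the $X$ case vacuous by Lemma~\ref{lemma:last}, the boolean cases immediate from the induction hypothesis, and the Until/Release cases reduced via Lemma~\ref{lemma:last} to $\eta\models\phi_2$ and then lifted by induction. Your explicit articulation of the stationarity fact $(\eta^\omega)_i=\eta^\omega$ in the Release case is a slight improvement in clarity over the paper, which compresses this step into a single line.
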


\begin{proof}
  We prove it by structural induction over $\phi$:
  \begin{itemize}
    \item If $\phi$ is a literal $p$, then $\eta\models p$ implies $p\in\eta$. Thus $\eta^{\omega}\models\phi$ is true; And if $\phi$ is $\tt$, then $\eta^{\omega}\models\tt$ is obviously true;
    \item If $\phi=\phi_1\wedge\phi_2$, then $\eta\models\phi$ implies $\eta\models\phi_1$ and $\eta\models\phi_2$. By induction
    hypothesis we have $\eta^{\omega}\models\phi_1$ and $\eta^{\omega}\models\phi_2$. So $\eta^{\omega}\models\phi_1\wedge\phi_2$; The proof is similar when $\phi=\phi_1\vee\phi_2$;
    \item If $\phi=X\psi$, then according to Lemma \ref{lemma:last}
    we know $\eta\models\phi$ cannot happen; And since $\phi$ is $X_w$-free, so $\phi$ cannot be a $X_w$ formula;

    \item If $\phi=\phi_1U\phi_2$, then $\eta\models\phi$ implies
    $\eta\models\phi_2$ according to Lemma \ref{lemma:last}. By induction hypothesis we have $\eta^{\omega}\models\phi_2$. Thus
    $\eta^{\omega}\models\phi$ is true from the LTL semantics;
    Similarly when $\phi=\phi_1R\phi_2$, we know for every $i\geq 0$ it is true that $(\xi_i=\eta^{\omega})\models\phi_2$. Thus
    $\eta^{\omega}\models\phi$ holds from the LTL semantics; The proof is done.

  \end{itemize}
\end{proof}

We showed earlier that $\ltlf$ satisfiability can be reduced to LTL
satisfiability problem.  We show that the satisfiability of some
$\ltlf$ formulas implies satisfiability of LTL formulas:
\begin{theorem}\label{thm:f2infsat}
Let $\phi$ be an $X_w$-free formula.
If $\phi$ is satisfiable as an $\ltlf$ formula,
then $\phi$ is also satisfiable as an LTL formula.
\end{theorem}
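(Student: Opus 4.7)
The plan is to lift the witnessing finite trace to an infinite one using the transition system of Section~\ref{sec:lts}, then invoke Lemma~\ref{lemma:f2inf} at the last state to get an infinite LTL witness. Concretely, assume $\phi$ is $\ltlf$-satisfiable via $\eta=\omega_0\ldots\omega_n$. By Theorem~\ref{theorem:main} there is a run $\phi=\phi_0\tran{\alpha_0}\phi_1\tran{\alpha_1}\ldots\tran{\alpha_n}\phi_{n+1}$ in $T_{\phi}$ with $\omega_i\models\alpha_i$ for all $i$ and, crucially, $CF(\alpha_n)\models\phi_n$ (viewed as a length-$1$ $\ltlf$ trace). I would first observe, by inspection of the clauses in Definition~\ref{def:expansion}, that the $\dnf{\cdot}$ rules never introduce an $X_w$ subformula: an $X_w\psi$ is rewritten exactly like $X\psi$, and the temporal unfoldings for $U$ and $R$ only produce $X$. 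Hence $X_w$-freeness is preserved along the transition system, so every $\phi_i$ is also $X_w$-free.

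Next, applying Lemma~\ref{lemma:f2inf} to $\phi_n$ and the one-letter trace $CF(\alpha_n)$ yields $CF(\alpha_n)^{\omega}\models\phi_n$ in the LTL sense. The remaining task is to "splice" this infinite tail onto the prefix $\omega_0\ldots\omega_{n-1}$ and conclude that the resulting infinite word
\[
\xi \;=\; \omega_0\,\omega_1\,\ldots\,\omega_{n-1}\cdot CF(\alpha_n)^{\omega}
\]
satisfies $\phi$ under the LTL semantics. For this I would establish a one-step gluing lemma: for any $X_w$-free formula $\psi$, if $\alpha\wedge X\psi'\in\dnf{\psi}$, if $\omega\models\alpha$, and if an infinite trace $\zeta$ satisfies $\psi'$ in LTL, then $\omega\cdot\zeta\models\psi$ in LTL. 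This is exactly the LTL soundness of the expansion $\dnf{\cdot}$, which reduces to the standard LTL equivalences $\phi_1 U\phi_2\equiv\phi_2\vee(\phi_1\wedge X(\phi_1U\phi_2))$ and $\phi_1 R\phi_2\equiv(\phi_1\wedge\phi_2)\vee(\phi_2\wedge X(\phi_1R\phi_2))$ together with propositional distributivity over the Boolean clauses of Definition~\ref{def:expansion}. With this gluing lemma, a finite reverse induction on $k=n,n{-}1,\ldots,0$ over the prefix of the run, using the $\omega_k\models\alpha_k$ at each step and starting from $CF(\alpha_n)^{\omega}\models\phi_n$, shows that the suffix $\omega_k\ldots\omega_{n-1}\cdot CF(\alpha_n)^{\omega}$ satisfies $\phi_k$ in LTL; the base $k=0$ delivers $\xi\models\phi$.

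The main obstacle is the gluing lemma itself, i.e., verifying that $\dnf{\cdot}$ is not just an $\ltlf$ characterization (as used in Lemma~\ref{lemma:lts}) but also a sound LTL characterization for $X_w$-free formulas. For $\tt$, literals, $\vee$, $\wedge$, and $X$ this is immediate; the temporal cases follow from the LTL fixpoint unfoldings above, and the conjunction case follows because the clauses of $\dnf{\phi_1\wedge\phi_2}$ are exactly the pairwise conjunctions of clauses of $\dnf{\phi_1}$ and $\dnf{\phi_2}$. The $X_w$-freeness hypothesis is essential here, since $X_w\psi$ unfolds to $\tt\wedge X\psi$ in $\dnf{\cdot}$ but is \emph{not} LTL-equivalent to $X\psi$; ruling out $X_w$ keeps the normal form faithful to both semantics simultaneously, which is precisely what makes the splice go through.
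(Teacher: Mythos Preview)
Your proposal is correct and follows essentially the same route as the paper: unwind the run in $T_\phi$ given by Lemma~\ref{lemma:lts}/Theorem~\ref{theorem:main}, apply Lemma~\ref{lemma:f2inf} at the last state to obtain an infinite LTL tail, and then propagate back to $\phi$ by a reverse induction along the run. The paper's own proof is a two-line sketch (``Combining Lemma~\ref{lemma:lts} and Lemma~\ref{lemma:f2inf}'') whose intended expansion is precisely your splice argument; the only cosmetic difference is that the paper repeats the last trace letter $\omega_n$ itself (yielding $\xi=\omega_0\ldots\omega_{n-1}(\omega_n)^\omega$) rather than $CF(\alpha_n)$. Your explicit verification that $X_w$-freeness propagates through Definition~\ref{def:expansion} and your ``gluing lemma'' (LTL soundness of $\dnf{\cdot}$) are exactly the points the paper leaves implicit in the word ``Similarly''.
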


\begin{proof}
Assume $\phi$ is a $X_w$-free $\ltlf$ formula, and is satisfiable. Let
$\eta=\omega_0\ldots\omega_n$ such that $\eta\models\phi$. Now we
interpret $\phi$ as an LTL formula.  Combining Lemma
\ref{lemma:lts} and Lemma \ref{lemma:f2inf}, we get that
$\xi\models\phi$ where $\xi=\omega_0\ldots\omega_{n-1}(\omega_n)^\omega$.
 \end{proof}
Equivalently, if $\phi$ is an LTL formula and $\phi$ is
unsatisfiable, then the $\ltlf$ formula $\phi$ is also unsatisfiable.
Note here the $\ltlf$ formula $\phi$ is $X_w$-free since it can be
considered as an LTL formula.

\begin{figure}[t]
\begin{minipage}[b]{0.45\linewidth}
\centering
\scalebox{0.8}{
\begin{tikzpicture}[>=stealth,shorten >=1pt,node distance=2cm,on grid,auto]
   \node[state,initial] (q_0)   {$\phi$}

   (q_0) edge [loop above] node {$\tt$} (q_0)
   (q_0) edge [loop below] node {$a$} (q_0)
   (q_0) edge [loop right] node {$\neg a$} (q_0);
\end{tikzpicture}
}
\caption{The transition system of $\phi=GF a \wedge GF \neg a$.}
\label{fig:lts1}
\end{minipage}
\hspace{0.3cm}
\begin{minipage}[b]{0.45\linewidth}
\centering
\scalebox{0.8}{
\begin{tikzpicture}[>=stealth,shorten >=1pt,node distance=2cm,on grid,auto]
   \node[state,initial] (q_0)   {$\phi$};
   \node[state] (q_1) [right=of q_0] {$\phi_1$};
   \path[->]
   (q_0) edge [loop above] node {$b$} (q_0)
   (q_0) edge [bend left] node [above] {$a$} (q_1)
   (q_1) edge node {$b$} (q_0)
   (q_1) edge [loop right] node {$a$} (q_1);
\end{tikzpicture}
}
\caption{The transition system of $\phi=G(a U b)$. Note $\phi_1=\phi\wedge a U b$}
\label{fig:lts2}

\end{minipage}
\end{figure}

\begin{example}
\begin{itemize}
\item Consider the $X_w$-free formula $\phi=GF a \wedge GF\neg a$, whose
  transition system is shown in Figure \ref{fig:lts1}. If $\phi$ is
  treated as an LTL formula, then we know that the infinite trace
  $(\{a\}\{\neg a\})^{\omega}$ satisfies $\phi$. However, if $\phi$ is
  considered to be an $\ltlf$ formula, then we know from that no
  accepting state exists in the transition system, so it is
  unsatisfiable. It is due to the fact that no transition
  $\psi_1\tran{\alpha}\psi_2$ in $T_{\phi}$ satisfies the condition
  $CF(\alpha)\models \psi_1$.
\item Consider another example formula $\phi=G(a U b)$, whose
  transition system is shown in Figure \ref{fig:lts2}. Here we can
  find an accepting state ($\phi$, as $\phi\tran{b}\phi$ and $CF(b)\models\phi$ hold). Thus we know that $\phi$ is
  satisfiable, interpreted over both finite or infinite traces.
\end{itemize}
\end{example}

\subsection{Obligation Formulas}
For an LTL formula $\phi$, Li et al. \cite{LZPVH13} have defined its \textit{obligation
  formula} $\of{\phi}$ and show that if
$\of{\phi}$ is satisfiable then $\phi$ is satisfiable. Since $\of{\phi}$ is essentially a boolean
formula, so we can check it efficiently using modern SAT solvers. However this
cannot apply to $\ltlf$ directly, which we illustrate in the
following example.

\begin{example}
Consider $\phi=GXa$, where $\alpha$ is a satisfiable propositional
formula. It is easy to see that it is satisfiable if it
is an LTL formula (with respect to some word $a^\omega$), while
unsatisfiable when it is an $\ltlf$ formula (because no finite trace
can end with the point satisfying $Xa$).
From \cite{LPZHVCoRR14}, the obligation formula of $\phi$ is
$\of{\phi}=a$, which is obviously satisfiable. So the satisfiability
of obligation formula implies the satisfiability of LTL formulas,
but not that of $\ltlf$ formulas.
\end{example}

We now show how to handle of Next operators ($X$ and $X_w$) after the
Release operators.  For a formula $\phi$, we define three obligation
formulas:
\begin{definition}[Obligation Formulas]\label{def:ofs}
Given an $LTL_f$ formula $\phi$, we define three kinds of obligation
formulas: global obligation formula, release obligation
formula, and general obligation formula--denoted
as $\ofg{\phi}$, $\ofr{\phi}$ and $\off{\phi}$, by induction over $\phi$.
(We use $\textit{ofx}$ as a generic reference to $\textit{ofg}$, $\textit{ofr}$, and $\textit{off}$.)
\begin{itemize}
\item
$\ofi{\phi}=\tt$ if $\phi=\tt$; and $\ofi{\phi}=\ff$ if $\phi=\ff$;
\item
If $\phi=p$ is a literal, then $\ofi{\phi}=p$;
\item
If $\phi=\phi_1\wedge\phi_2$,
then $\ofi{\phi}= \ofi{\phi_1}\wedge \ofi{\phi_2}$;
\item
If $\phi=\phi_1\vee\phi_2$, then $\ofi{\phi}= \ofi{\phi_1}\vee \ofi{\phi_2}$;
\item
If $\phi=X\phi_2$, then $\off{\phi} = \off{\phi_2}$, $\ofr{\phi}=\ff$ and $\ofg{\phi}=\ff$;
\item
If $\phi=X_w\phi_2$, then $\off{\phi} = \off{\phi_2}$, $\ofr{\phi}=\ff$ and $\ofg{\phi}=\tt$;
\item If $\phi=\phi_1 U\phi_2$, then $\ofi{\phi}=\ofi{\phi_2}$.
\item If $\phi=\phi_1R\phi_2$, then $\off{\phi}=\ofr{\phi}$, $\ofr{\phi}=\ofr{\phi_2}$ and $\ofg{\phi}=\ofg{\phi_2}$
\end{itemize}
For example in the third item, the equation represents actually three:
$\off{\phi}= \off{\phi_1}\wedge \off{\phi_2}$, $\ofr{\phi}= \ofr{\phi_1}\wedge \ofr{\phi_2}$ and $\ofg{\phi}= \ofg{\phi_1}\wedge \ofg{\phi_2}$.
\end{definition}

For $\off{\phi}$, the changes in comparison to \cite{LPZHVCoRR14} are the
definition for release formulas, and introducing the $X_w$ operator.
For example, we have that $\off{GXa}$ is $\ff$ rather than $a$. Moreover,
since the $\ltlf$ formula $GX_wa$ is satisfiable, the definition of
$\ofg{\phi}$ is required to identify this situation. (Below we show
a fast satisfiability-checking strategy that uses global obligation
formulas.)

The obligation-acceleration optimization works as follows:
\begin{theorem}[Obligation Acceleration]\label{thm:off}
For an $\ltlf$ formula $\phi$, if $\off{\phi}$ is satisfiable then
$\phi$ is satisfiable.
\end{theorem}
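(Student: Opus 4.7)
The plan is to strengthen the statement with an auxiliary invariant about the release obligation and prove the enhanced claim by simultaneous structural induction on $\phi$. Specifically I will prove two claims: (F) if $A\in\Sigma$ satisfies $\off{\phi}$, then $\phi$ is $\ltlf$-satisfiable; and (R) if $A$ satisfies $\ofr{\phi}$, then the singleton trace $\langle A\rangle$ satisfies $\phi$. The theorem is exactly claim (F), but the inductive case for Release forces the appearance of $\ofr$ via the equation $\off{\phi_1 R \phi_2}=\ofr{\phi_2}$, which is why the auxiliary invariant (R) is needed. Notice that $\ofg$ need not appear in the proof, since $\ofg$ does not feed into the recursion of $\off$ anywhere in Definition~\ref{def:ofs}.

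I would handle the inductive cases in the following order. The base cases ($\tt$, $\ff$, and literals) are immediate. For $\wedge$ and $\vee$, the IH combines straightforwardly, since each obligation flavor recurses through itself. For $\phi=X\phi_2$ and $\phi=X_w\phi_2$, the equation $\off{\phi}=\off{\phi_2}$ together with IH(F) yields a finite trace $\eta\models\phi_2$; prepending any state $\omega_0$ to $\eta$ produces a trace of length at least $2$ that satisfies $X\phi_2$ (and hence $X_w\phi_2$) by the $\ltlf$ semantics. Claim (R) is vacuous here, since $\ofr{X\phi_2}=\ofr{X_w\phi_2}=\ff$. For $\phi=\phi_1 U \phi_2$, both $\off$ and $\ofr$ reduce to the $\phi_2$ analogue, and a witness for $\phi_2$ also witnesses $\phi_1 U \phi_2$ by taking $i=0$ in the semantic definition of Until (and by Lemma~\ref{lemma:last} for the singleton claim in (R)).

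The critical case is Release. For $\phi=\phi_1 R \phi_2$, the definition gives $\off{\phi}=\ofr{\phi_2}$, so any $A$ with $A\models\off{\phi}$ already satisfies $\ofr{\phi_2}$; invoking IH(R) yields $\langle A\rangle\models\phi_2$, and then Lemma~\ref{lemma:last} gives $\langle A\rangle\models\phi_1 R \phi_2$, establishing both (F) and (R) simultaneously. This is where I expect the main subtlety to live: $\off$ discards the branch of $R$ that relies on $\phi_1$ ever holding and commits instead to the ``$\phi_2$ always'' branch via $\ofr$, so the auxiliary invariant (R) must be precisely singleton satisfaction in order for Lemma~\ref{lemma:last} to close the induction. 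Once both invariants are established, the theorem follows as the special case (F).
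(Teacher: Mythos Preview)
Your decomposition into a simultaneously-proved pair of invariants is the right idea, and your claim (R) together with the handling of the Release case is essentially what the paper does. However, there is a genuine gap in the conjunction case for claim (F).

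As stated, (F) asserts only that $\phi$ is $\ltlf$-satisfiable, with no constraint on the witnessing trace. For $\phi=\phi_1\wedge\phi_2$ the induction hypothesis therefore yields some $\eta_1\models\phi_1$ and some $\eta_2\models\phi_2$, but nothing ties $\eta_1$ to $\eta_2$: they may differ in length and in every letter, and separate satisfiability of $\phi_1$ and of $\phi_2$ does not imply satisfiability of $\phi_1\wedge\phi_2$. So the $\wedge$ step does \emph{not} ``combine straightforwardly.'' (By contrast, your claim (R) does combine under $\wedge$, precisely because the witness there is the fixed singleton $\langle A\rangle$ for both conjuncts.)

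The paper avoids this by strengthening the invariant so that the witness is pinned to the assignment $A$: it proves that some power $A^n$ satisfies $\phi$. In the $\wedge$ case one then has $A^{n_1}\models\phi_1$ and $A^{n_2}\models\phi_2$ with both witnesses lying on the single one-letter ray $\{A^k : k\ge 1\}$, which is what makes a common witness possible. Strictly, one needs ``$A^n\models\phi$ for all sufficiently large $n$'' so that $n=\max(n_1,n_2)$ works; correspondingly (R) should be strengthened to ``$A^n\models\phi$ for all $n\ge 1$'' so that the Release case of the strengthened (F) goes through (every suffix $A^{n-i}$ must satisfy $\phi_2$, not just the singleton). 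Your Lemma~\ref{lemma:last} argument then becomes the base case of this stronger (R). With these strengthenings your plan yields a complete proof.
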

\begin{proof}
Since $\off{\phi}$ is satisfiable, there exists $A\in\Sigma$ such
that $A\models\off{\phi}$.  We prove that there exists $\eta=A^n$
where $n\geq 1$ such that $\eta\models\phi$, by structural
induction over $\phi$. Note the cases $\phi=\tt$ or $\phi=p$ are
    trivial. For other cases:
    \begin{itemize}
    \item If $\phi=\phi_1\wedge\phi_2$, then
    $\off{\phi}=\off{\phi_1}\wedge\off{\phi_2}$ from Definition
    \ref{def:ofs}.  So $\off{\phi}$ is satisfiable implies that there
    exists $A\models\off{\phi_1}$ and $A\models\off{\phi_2}$. By
    induction hypothesis there exists $\eta_i=A^{n_i}$ ($n_i\ge 0$)
    such that $\eta_i\models\phi_i$ ($i=1,2$). Assume $n_1\geq n_2$,
    then let $\eta=\eta_1$. Then, $\eta\models\phi_1\wedge\phi_2$.
    \item If $\phi=X\phi_2$ or $\phi=X_w\phi_2$, then $\off{\phi}$ is satisfiable
    iff $\off{\phi_2}$ is satisfiable. So there exists $A$ models $\phi_2$.
    By induction hypothesis, there exists $n$ such that $A^n\models\phi_2$, thus
    according to $\ltlf$ semantics, we know $A^{n+1}\models\phi$;
    \item If$\phi=\phi_1 R\phi_2$, then $\off{\phi}=\ofr{\phi_2}$. Thus
    $\ofr{\phi_2}$ is also satisfiable. So there exists $A\models\ofr{\phi_2}$, based on which we can
    show that $A\models\phi_2$ by structural induction over
    $\phi_2$ by a similar proof. Thus Let $\eta=A$ and according to Lemma
    \ref{lemma:last} we know $\eta\models\phi_2$ implies
    $\eta\models\phi$. The case for Until can be treated in a similar way, thus the proof is done.
    \end{itemize}
  \end{proof}

\subsection{A Complete Acceleration Technique for Global Formulas}
The obligation-acceleration technique (Theorem \ref{thm:off}) is sound
but not complete, see the formula $\phi=a \wedge GF(\neg a)$, in which
$\off{\phi}$ is unsatisfiable, while $\phi$ is, in fact, satisfiable.
In the following, we prove that both soundness and completeness hold for
the \emph{global} $\ltlf$ formulas, which are formulas of the form of
$G\psi$, where $\psi$ is an arbitrary $\ltlf$ formula.

\begin{theorem}[Obligation Acceleration for Global formulas]\label{thm:global}
For a global $\ltlf$ formula $\phi=G\psi$, we have that $\phi$ is
satisfiable iff $\ofg{\psi}$ is satisfiable.
  \end{theorem}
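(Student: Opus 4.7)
The plan is to reduce the theorem to a single structural lemma about length-one traces, and then read off both directions of the biconditional from the endpoints of a witness trace. Specifically, I would first prove the auxiliary statement that for every $\ltlf$ formula $\psi$ and every single letter $A \in \Sigma$ viewed as a trace of length one, $A \models \psi$ if and only if $A \models \ofg{\psi}$.

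I would establish this by structural induction on $\psi$, invoking Lemma \ref{lemma:last} to collapse the temporal cases. The propositional cases ($\tt$, $\ff$, literals, $\wedge$, $\vee$) follow directly from the matching recursive clauses of $\ofg{\cdot}$ in Definition \ref{def:ofs}. The crucial cases are $X$ and $X_w$: on a length-one trace, $X\phi_2$ is unsatisfiable by Lemma \ref{lemma:last}, matching $\ofg{X\phi_2} = \ff$, while $X_w\phi_2$ holds vacuously, matching $\ofg{X_w\phi_2} = \tt$. For $\phi_1 U \phi_2$ and $\phi_1 R \phi_2$, Lemma \ref{lemma:last} reduces satisfaction on a length-one trace to satisfaction of $\phi_2$, which is exactly what $\ofg{\phi_1 U \phi_2} = \ofg{\phi_1 R \phi_2} = \ofg{\phi_2}$ tracks, so the inductive hypothesis closes these cases.

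With the lemma in hand the theorem is almost immediate. For the ($\Leftarrow$) direction, a propositional model $A$ of $\ofg{\psi}$ gives, by the auxiliary lemma, $A \models \psi$ when $A$ is read as a length-one trace; then the singleton trace $\eta = A$ itself satisfies $G\psi$, since its only suffix is $A$. For the ($\Rightarrow$) direction, suppose $\eta = \omega_0\cdots\omega_n$ is a finite trace with $\eta \models G\psi$; by the semantics of $G$ every suffix $\eta_i$ satisfies $\psi$, so in particular the last suffix $\eta_n = \omega_n$, which has length one, satisfies $\psi$, and the lemma yields $\omega_n \models \ofg{\psi}$.

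The main obstacle I anticipate is verifying that the $X$ and $X_w$ clauses of $\ofg{\cdot}$ really do line up with Lemma \ref{lemma:last}'s behaviour on length-one traces; this is precisely where $\ofg{\cdot}$ has been tuned away from $\off{\cdot}$ and $\ofr{\cdot}$, and the example $\phi = a \wedge GF\neg a$ discussed just after Theorem \ref{thm:off} shows why the tuning matters and why this direction would fail if we used $\off$ instead. Once the two Next cases are checked, the remainder of the induction is routine bookkeeping.
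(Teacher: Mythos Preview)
Your proof is correct and, in the forward direction, more elementary than the paper's. The paper establishes the $(\Rightarrow)$ direction by going through the transition-system machinery: it invokes Theorem~\ref{theorem:main} to get an accepting state $\psi_1$ with $CF(\alpha)\models\psi_1$, argues that for a global formula every reachable state $\psi_1$ satisfies $CF(\phi)\subseteq CF(\psi_1)$ (so $CF(\alpha)\models\phi$), applies Lemma~\ref{lemma:last} to obtain $CF(\alpha)\models\psi$, and only then carries out a structural induction to conclude $CF(\alpha)\models\ofg{\psi}$. You bypass all of the transition-system reasoning by observing directly from the semantics of $G$ that the \emph{last} letter $\omega_n$ of any satisfying trace is a length-one model of $\psi$, and then your single auxiliary lemma does the rest. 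Packaging the structural induction once, as the biconditional ``$A\models\psi$ iff $A\models\ofg{\psi}$ for length-one $A$'', is also tidier than the paper's two separate one-directional inductions (one in each half of the proof). What the paper's route buys is a closer tie to the algorithmic framework of Section~\ref{sec:checking}; what yours buys is self-containment and no dependence on Theorem~\ref{theorem:main} or on the invariant about $CF(\phi)\subseteq CF(\psi_1)$. One small quibble: the example $a\wedge GF\neg a$ you cite illustrates incompleteness of $\off$ for \emph{non}-global formulas; the more pointed illustration for this theorem is $GXa$ versus $GX_wa$, which is exactly where the $X$/$X_w$ clauses of $\ofg{\cdot}$ diverge from $\off{\cdot}$.
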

  \begin{proof}
    For the forward direction, assume that $\phi$ is satisfiable.
    It implies that there is a
    finite trace $\eta$ satisfying $\phi$. According to Theorem
    \ref{theorem:main}, $\eta$ can run on $T_{\phi}$ and reaches an
    accepting state $\psi_1$, i.e., $\psi_1\tran{\alpha}\psi_2$ and
    $CF(\alpha)\models\psi_1$. Since $\phi$ is a global formula and
    $\psi_1$ is reachable from $\phi$, it is not hard to prove that
    $CF(\phi)\subseteq CF(\psi_1)$ from Definition \ref{def:lts}. So
    $CF(\alpha)\models\phi$ is also true. Since $\phi$ is a global
    formula so $CF(\alpha)\models\psi$ holds from Lemma
    \ref{lemma:last}. Then one can prove that
    $CF(\alpha)\models\ofg{\psi}$ by structural induction over $\psi$
    (it is left to readers here),
    which implies that $\ofg{\psi}$ is satisfiable.

    For the backward direction, assume $\ofg{\psi}$ is satisfiable. So
    there exists $A\in\Sigma$ such that $A\models\ofg{\psi}$. Then one
    can prove $A\models\phi$ is also true by structural induction over
    $\psi$ ($\phi=G\psi$). For paper limit, this proof is left to readers.
    So $\phi$ is satisfiable. The proof is done.
  \end{proof}

\subsection{Acceleration for Unsatisfiable Formulas}\label{sec:unsatacc}
Theorem \ref{thm:f2infsat} indicates that if an LTL formula $\phi$
(of course $X_w$-free) is unsatisfiable, then the $LTL_f$ formula
$\phi$ is also unsatisfiable.  As a result, optimizations for
unsatisfiable LTL formulas, for instance those  in \cite{LPZHVCoRR14},
can be used directly to check unsatisfiable $X_w$-free $\ltlf$ formulas.

\section{Experiments}\label{sec:exp}
In this section we present an experimental evaluation. The algorithms are
implemented in the \tool\ tool%
\footnote{Tool will be released upon paper publication.}.
We have implemented three optimization strategies. They are
1). \textit{off}: the obligation acceleration technique for $\ltlf$(Theorem \ref{thm:off});
2). \textit{ofg}: the obligation acceleration for global $\ltlf$
  formula (Theorem \ref{thm:global}); 3). \textit{ofp}: the acceleration for unsatisfiable formulas
  (Section \ref{sec:unsatacc}).
Note that all three optimizations can benefit from the power of modern
SAT solvers.

We compare our algorithm with the approach using off-the-shelf tools
for checking LTL satisfiability.  We choose the tool \polsat,
a portfolio LTL solver, which was introduced in
\cite{LPZVHCoRR13}. One main feature of \polsat\ is that it integrates
most existing LTL satisfiability solvers; consequently, it is
currently the best-of-breed LTL satisfiability solver.
The input of \tool\ is directly an $\ltlf$ formula $\phi$,
while that of \polsat\ should be $\textit{Tail}\wedge \textit{Tail} U
G(\neg\textit{Tail})\wedge t(\phi)$, which is the LTL formula that is
equi-satisfiable with the $\ltlf$ formula $\phi$.

The experimental platform of this paper is a cluster
that consists of 47 IBM Power 755 nodes,
each of which contains four eight-core POWER7 processors running
at 3.86GHz.
In our experiments, both \tool\ and \polsat\ occupy a unique node,
and \polsat\ runs all its integrated solvers in parallel by using
independent cores of the node. The time is measured by Unix time
command, and each test case has the maximal limitation of 60
seconds.

Since LTL formulas are also $\ltlf$ formulas, we use
existing LTL benchmarks to test the tools. We compare the results
from both tools, and no inconsistency occurs.

\subsection{Schuppan-collected Formulas}
We consider first the benchmarks introduced in previous works \cite{SD11}.
The benchmark suite there include earlier benchmark suites (e.g.,
\cite{RV10}), and we refer to this suite as \textit{Schuppan-collected}.
The \textit{Schuppan-collected} suite has a total amount of 7448 formulas.
The different types of benchmarks are shown in
the first column of Table \ref{tab:schuppan}.

\begin{table}
\centering
    \caption{Experimental results on Schuppan-collected formulas.}\label{tab:schuppan}
    \scalebox{0.8}
    {
    \begin{tabular}{|l|r|r|r|}
    \hline
    Formula type  &  \tool (sec.)  &  \polsat (sec.) & \polsat /\tool\\
    \hline
      /acacia/example  &  1.5  &  3.3 & 2.2\\
\hline
/acacia/demo-v3  &  1.4  &  604.7 & 431.9\\
\hline
/acacia/demo-v22  &  2.0  &  1.3 & 0.65\\
\hline
/alaska/lift  &  23.0  &  7319.6 & 318.2\\
\hline
/alaska/szymanski  &  1.2  &  7.3 & 6.1\\
\hline
/anzu/amba  &  2120.9 &  2052.9 & 0.97\\
\hline
/anzu/genbuf  &  3606.9  &  3717.9 & 1.0\\
\hline
/rozier/counter  &  1840.3  &  3009.3 & 1.6\\
\hline
/rozier/formulas  &  552.9  &  467.0 & 0.8\\
\hline
/rozier/pattern  &  22.9  &  49.9 & 2.1\\
\hline
/schuppan/O1formula  &  2.9  &  7.1 & 2.4\\
\hline
/schuppan/O2formula  &  3.1  &  1265.0 & 408.1\\
\hline
/schuppan/phltl  &  226.3  &  602.5 & 2.6\\
\hline
/trp/N5x  &  10.5  &  42.0 & 4.0\\
\hline
/trp/N5y  &  2764.9  &  2777.4 & 1.0\\
\hline
/trp/N12x  &  22.8  &  24061.1 & 1055.3\\
\hline
/trp/N12y  &  4040.2  &  4049.2 & 1.0\\
\hline
Total  &  15244.2  &  50038.2 & 3.2\\
\hline
\end{tabular}
}
\end{table}

Table \ref{tab:schuppan} shows the experimental results on
\textit{Schuppan-collected} benchmarks. The fourth column of the table
shows the speed-up from \tool\ to \polsat. One can see that the results
from \tool\ outperforms those from \polsat, often by several orders of
magnitudes. We explain some of them.

The formulas in ``Schuppan-collected/alaska/lift'' are mostly
  unsatisfiable, which can be handled by the \textit{ofg} technique of
  \tool. On the other side, \polsat\ needs more than 300 times to
  finish the checking. The same happens on the
  ``Schuppan-collected/trp/N12x'' patterns, in which \tool\ is more
  than 1000 times faster. For the ``Schuppan-collected/schuppan/O2formula'' pattern
  formulas, \tool\ scales better due to the \textit{ofp} technique.

  Among the results from \tool, totally 5879 out of 7448 formulas
  in the benchmark are checked by using the \textit{off}
  technique. This indicates the \textit{off} technique is very efficient.
  Moreover, 84 of them are finished by exploring whole
  system in the worst time, which requires further improvement.
Overall, we can see \polsat\ is three times slower on this benchmark
suite than \tool.

\begin{figure}[t]
\begin{minipage}[b]{0.45\linewidth}
\centering
\includegraphics[scale = 0.53]{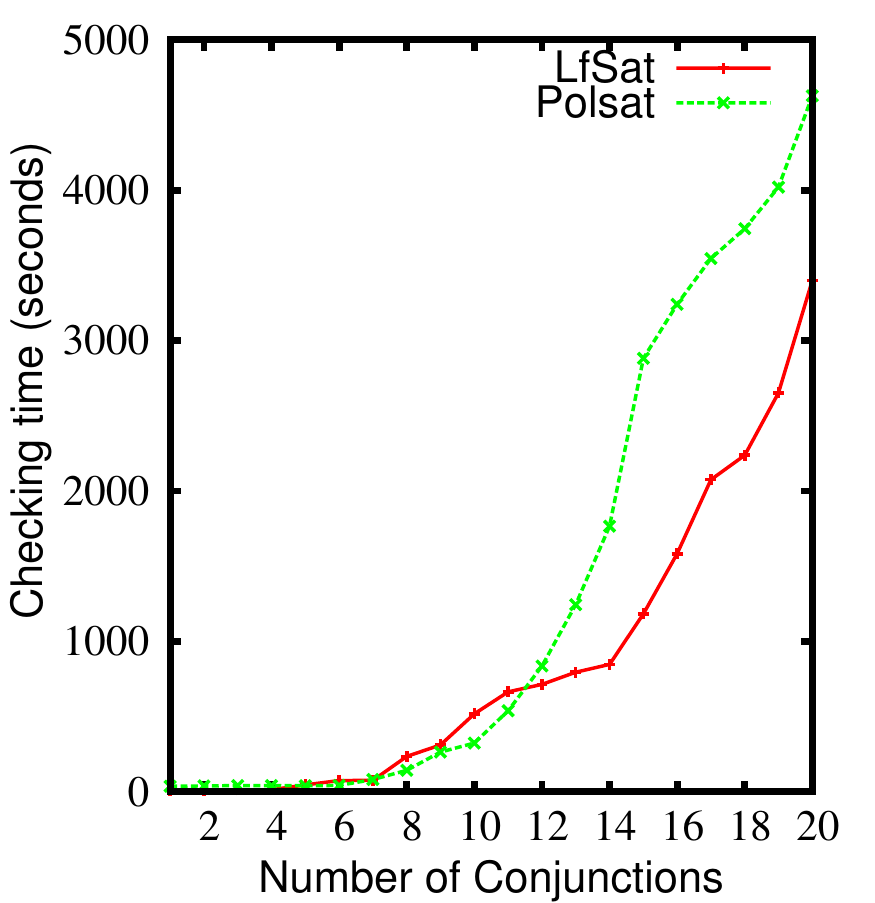}
      \caption{Experimental results on random conjunction formulas.}\label{fig:rc}
\vspace{0.3cm}
\end{minipage}
\hspace{0.3cm}
\begin{minipage}[b]{0.45\linewidth}
\centering
\includegraphics[scale = 0.53]{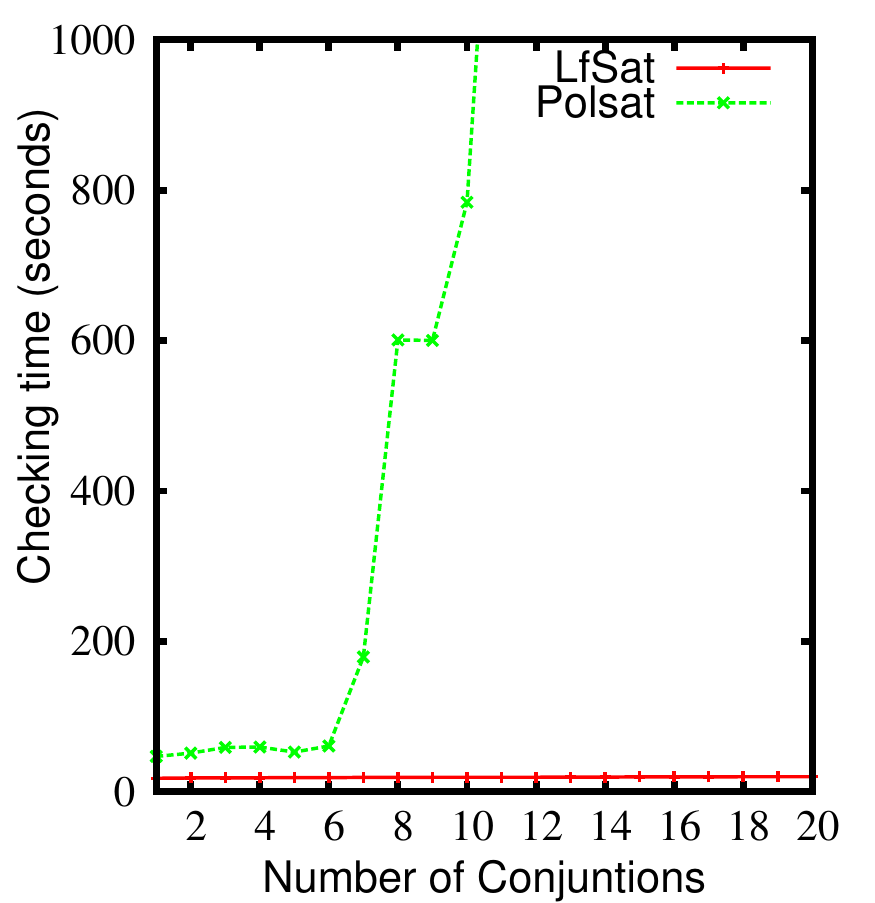}
\caption{Experimental results on global random conjunction formulas.}\label{fig:rc_global}
\end{minipage}
\end{figure}

\subsection{Random Conjunction Formulas}
Random conjunction formulas have the form of $\bigwedge_{1\leq i\leq
  n} P_i$, where $P_i$ is randomly selected from typical small
pattern formulas widely used in model checking \cite{LZPVH13}.
By randomly choosing the that atoms the small patterns use, a large number
of random conjunction formulas can be generated. More specially, to
evaluate the performance on global formulas, we also fixed the selected
$P_i$ by a random global pattern, and thus create a set of global
formulas.  In our experiments, we test 10,000 cases each for both random
conjunction and global random conjunction formulas, with the number of
conjunctions varying from 1 to 20 and 500 cases for each number.

Figure \ref{fig:rc} shows the comparison results on random conjunction
formulas. On average \tool\ earns about 10\% improving performance on
this kind of formulas. Among all the 10,000 cases, 8059 of
them are checked by the \textit{off} technique; 1105 of them are
obtained by the \textit{ofg} technique; 508 are acquired
by the \textit{ofp} technique; and another 107 are from an accepting state.
There are also 109 formulas equivalent to $\tt$ or $\ff$, which can be
directly checked.  In the worst case, 76 formulas are
finished by exploring the whole transition system. About 36 formulas
fail to be checked within 60 seconds by \tool. Statistics above show the optimizations
are very useful.

Moreover, one can conclude from Figure \ref{fig:rc_global} that,
\tool\ dominates \polsat\ when performing on the global random
conjunction formulas. As the \textit{ofg} technique is both sound and
complete for global formulas and invokes SAT solvers only once, so \tool\ performs almost constant time
for checking both satisfiable and unsatisfiable formulas.  Compared
with that, \polsat\ takes an ordinary checking performance for this
kind of special formulas. Indeed, the \textit{ofg} technique is
considered to play the crucial role on checking global $\ltlf$ formulas.

\section{Conclusion}\label{sec:con}
In this paper we have proposed a novel $\ltlf$ satisfiability-checking
framework based on the \textit{$\ltlf$ transition system}.  Meanwhile,
three different optimizations are introduced to accelerate the
checking process by using the power of modern SAT solvers, in which
particularly the \textit{ofg} optimization plays the crucial role on checking
global formulas.  The experimental results show that, the checking
approach proposed in this paper is clearly superior to the
reduction to LTL satisfiability checking.

\bibliography{ok,cav}

\end{document}